\newtheorem{theorem}{Theorem}[section]
\newtheorem{lemma}[theorem]{Lemma}
\newcommand{\defeq}{:=}
\newcommand{\dist}[1]{\mathcal{D}(#1)}
\newcommand{\ie}{\textit{i.e.}\xspace}
\newcommand{\part}{\mathcal{Q}}
\newcommand{\AP}{\mathcal{AP}}
\DeclareMathOperator*{\argmin}{arg\,min}
\def\fat#1{#1}
\title{Transfer Entropy in MDPs with  Temporal Logic Specifications
}
\author{Suda Bharadwaj \and Mohamadreza Ahmadi \and
 Takashi Tanaka \and Ufuk Topcu
\thanks{All authors are with the University of Texas at Austin. E-mail: \{suda.b, mrahmadi, ttanaka, utopcu\}@utexas.edu}%
}
\tikzstyle{block} = [draw, fill=blue!20, rectangle, 
\tikzstyle{sblock} = [draw, fill=blue!20, rectangle, 
\tikzstyle{sum} = [draw, fill=blue!20, circle, node distance=1.2cm]
\tikzstyle{input} = [coordinate]
\tikzstyle{output} = [coordinate]
\tikzstyle{pinstyle} = [pin edge={to-,thin,black}]
\begin{document}

\maketitle
\thispagestyle{empty}
\pagestyle{empty}


\begin{abstract}
Emerging applications in autonomy require  control techniques that take into account uncertain environments, communication and sensing constraints, while satisfying high-level mission specifications. Motivated by this need, we consider a class of Markov decision processes (MDPs), along with a \emph{transfer entropy} cost function. In this context, we study high-level mission specifications as co-safe linear temporal logic (LTL) formulae.  We provide a method to synthesize a policy that minimizes the weighted sum of the transfer entropy and the probability of failure to satisfy the specification. We derive a set of coupled non-linear equations that an optimal policy must satisfy. We then use a modified Arimoto-Blahut algorithm to solve the non-linear equations. Finally, we demonstrated the proposed method on a navigation and path planning scenario of a Mars rover.
\end{abstract}


\section{Introduction}
Autonomous systems are expected to deliver increasingly complex missions in dynamic and uncertain environments. In space applications, for example, these systems are in addition  fettered by communication or sensing restrictions. For instance, in the upcoming  Mars 2020 rover mission,  a Mars rover is tasked to safely explore an uncertain environment  and coordinate with a scouting helicopter~\cite{landau2015helicopter}. Missions of such sophisticated nature will necessitate on-board autonomy \cite{francis2017advanced,estlin2007increased}. Nonetheless, tight sensing constraints, due to the power consumption of on-board sensors and transmitters, and  bandwidth limitation on data sent from the Earth and orbiting satellites \cite{sherwood2014,Backes1999} further complicates the navigation task. In these cases, it is necessary  for autonomous agents to make decisions to complete their task with \emph{limited information}. 

Markov decision processes (MDPs) are one of the most widely studied models for decision-making under uncertainty in the fields of artificial intelligence, robotics, and optimal control \cite{Papadimitriou87}. We model the interaction between autonomous agent and an uncertain environment using a Markov decision process~(MDP) with an additional \emph{transfer entropy} cost that we refer to as a transfer entropy MDP~\cite{takashi17}. We use the additional transfer entropy cost~\cite{schreiber2000} to quantify the directional information flow  from the state of an MDP (representing the uncertain environment or the location of the autonomous agent) to the control policy. Intuitively, minimizing the transfer entropy promotes policies that rely less on the knowledge of the current state of the system. In communication theory, a related quantity called \emph{directed information} has been used to measure channel capacities in feedback systems \cite{massey1990causality,tatikonda2009capacity} as well as a proxy for feedback data rate to controllers \cite{silva2011achievable}. 


There has been significant work on quantifying information requirements for low-level control requirements, such as stability~\cite{Nair07}. However, quantifying information requirements for high-level decision-making scenarios that we are interested in are not as widely studied. There have been model reduction techniques for MDPs under temporal logic constraints studied where states and actions that are completely irrelevant to the mission are removed~\cite{Bharadwaj17,brazdil2014verification,ciesinski2008reduction}. However, these approaches do not \emph{quantify} the information flow to the controller from the state. \cite{Tishby2011} examines directed information in MDPs to quantify information and policies are penalized if they vary too much from a completely uninformed starting point, e.g, take any action with equal probability. We are, on the other hand, interested in studying the causality of information from the state to the controller, i.e, we seek to penalize sending information that is not relevant for the decision-making process. Hence, transfer entropy is a more suitable information-theoretic metric than directed information in our setting. 

%

%


We formally describe high-level mission specifications that are defined in temporal logic. Temporal logic has been used as a formal way to allow the user to relatively intuitively specify high-level specifications, in for example, robotics and autonomy applications~\cite{Svorenova2013,LPH15,wu2017learning}. 
Several tools exist to synthesize policies in MDPs with probabilistic temporal logic specifications~\cite{Fu15}. We study the effect of information restriction on satisfying temporal logic objectives in MDPs with a transfer entropy cost. 



\paragraph*{\textbf{Contributions}} We develop a novel framework to formally connect information-theoretic techniques for policy synthesis in MDPs with techniques from formal methods and probabilistic model checking. Specifically, our contributions are as follows:\\
(1) We develop a framework based on MDPs with a transfer entropy cost which places a cost on state variables that are `expensive to observe' in an information-theoretic sense. \\
(2) We incorporate a temporal logic constraint by optimizing the weighted sum of the probability of satisfying a mission specification and the transfer entropy cost.\\
(3) In contrast to standard MDP policy computation under temporal logic specifications, the transfer entropy cost leads to randomized optimal policies \cite{tanaka2017lqg,Todorov09,takashi17} necessitating policy search in an infinite state space. To solve this efficiently, we exploit a necessary optimality condition that the policy must satisfy.\\
(4) We solve these coupled non-linear equations using a modified version of an iterative algorithm from~\cite{Blahut72}.\\
(5) While the proposed method builds on earlier results in \cite{takashi17}, we generalize the setting to penalize subsets of state variables and incorporate temporal logic constraints. \\
(6) We apply our results in a case study involving path planning for a Mars rover.

\section{Preliminaries}
The sequence $(x_0,x_{1}...x_t)$ is denoted $x^t$ and the subsequence $x_l,x_{k+1}...x_k$ is denoted by $x_{l}^{k}$. We use upper-case letters to denote random variables and lower-case letters for the realizations of the corresponding random variable. \\
We denote by $\dist{\mathcal{X}}$ the set of all probability distributions on a finite
set $\mathcal{X}$, \ie all functions $f: \mathcal{X} \to [0,1]$ such that $\sum_{x\in \mathcal{X}}f(x)=1$. Finally, for a set $\mathcal{S}$, we define $2^\mathcal{S}$ as the set of all subsets of $\mathcal{S}$ and $\mathcal{S}^{\omega}$ as the set of all infinite sequences of elements in $\mathcal{S}$
\subsection{Markov Decision Processes}
\paragraph*{Labeled Markov decision process (MDP)} Consider a set $\AP$ of \emph{atomic propositions} which can be used, for example, to mark a state as being a ``faulty configuration'' (reaching it is, thus, undesirable), for example an obstacle. A \emph{labeled MDP} is an MDP whose states are labeled with atomic propositions. More formally, it is a tuple $M=(\mathcal{X},\mathcal{U},p,\AP,L)$ where 
\begin{itemize}
\item $\mathcal{X}$ is a
	finite set of \emph{states},
\item $\mathcal{U}$ is a finite alphabet of \emph{actions},
\item 	$p: \mathcal{X}\times\mathcal{U} \to \mathcal{D}({\mathcal{X}})$ is a \emph{probabilistic	transition function} that assigns, to a state $x\in \mathcal{X}$ and an action $u \in\mathcal{U}$, a probability distribution over the successor states. We abbreviate $p(x_{t},u)(x_{t+1})$ by $p(x_{t+1}|x_t,u_t)$.
\item $L : \mathcal{X} \rightarrow 2^{\AP}$ is the \emph{labeling function} which indicates the set of atomic propositions which are true in each state of the MDP.
\end{itemize}

\paragraph*{Runs and policies}
A \emph{run} from state $x_0$ with time horizon $T$ is a sequence $\rho = x_0 u_0 x_1 u_1 \dots ,x_{T-1},u_{T-1},x_{T}$ of states and actions such that for all $0 \leq t\leq T$ we have $p(x_{t+1}|x_t,u_t)>0$. 
A \emph{policy} corresponds to a way of selecting actions based on the history
of states and actions. While \emph{deterministic stationary} policies
are known to be sufficient for certain classes of problems, such as pure reachability ~\cite{puterman2014}, policies in general can be non-deterministic and history dependent. In this paper, we consider the general form and formally represent a policy as a conditional probability distribution $q_t(u_t|x^t,u^{t-1})$. 

A run $\rho$ is \emph{consistent} with a policy $q$ if it can be
obtained by extending its prefixes using $q$. Formally, $\rho=x_0
u_0 x_1 u_1 \dots$ is consistent with $q$ if for all $t \ge 0$ we have that
$u_t \in \{u| q_t(u|x^t,u^{t-1} > 0)\}$ and $p(x_{t+1}|x_t,u_t)>0$

\paragraph*{Markov chain}
A Markov chain is a tuple $(\mathcal{X},x_I,p)$ where $\mathcal{X}$ is (in our case) a finite set of states, $x_I \in \mathcal{X}$ is the initial state, and $p: \mathcal{X} \to \dist{\mathcal{X}}$ is a probabilistic transition function. An MDP $M$ together with a policy $q$ induces a \emph{Markov chain} $M^q$.  Notions of runs in a Markov chain are the same as those defined earlier. 

Given a Markov chain $M^q = (\mathcal{X},x_I,p)$, the state visited at the step $t$ is
a random variable. We denote by $h^{k}(x,\mathcal{B})$ the probability that a
run starting from state $x$ visits the set $\mathcal{B}$ in exactly $k$ steps. By definition
$h^{\leq i}(x,\mathcal{B}) = \sum_{k=0}^{i} h^{k}(x,\mathcal{B})$ denotes the probability that run from $x$ reaches the set $\mathcal{B}$ in \emph{at most} $i$ steps where $h^0(x,\mathcal{B})$ is $0$ if $x
\not\in \mathcal{B}$ and $1$ otherwise.



\subsection{Temporal Logic}
\paragraph*{Co-safe linear temporal logic} We utilize linear temporal logic (LTL) to specify the objectives of the system. For example, we can specify that an agent infinitely often patrols a certain set of states (liveness) while not entering undesirable states (safety). For the formal semantics of LTL, see \cite{BaierKatoen08}. We are interested in minimizing the expected information cost over a finite time horizon. However, this is not well defined for general LTL formulas as the cost can, in general, diverge. We will thus look at a class of formulas that can be satisfied in finite time called co-safe formulas which we denote by $\varphi$. These are commonly used in optimal control of MDPs \cite{Lacerda14}. It was shown in \cite{kupferman2001model} that any LTL formula in which the negation is only applied directly to the atomic propositions called \emph{positive normal form} and which only uses the connectives $\LTLdiamond$ (eventually), $\LTLcircle$ (next), and $\LTLu$ (until) are co-safe. 

\paragraph*{Deterministic finite automaton (DFA)} Any co-safe LTL formula $\varphi$ can be translated to a DFA \cite{kupferman2001model}. A DFA is a tuple $\mathcal{A}_{\varphi} = (\mathcal{S},s_I,2^{\AP}, \delta,\textrm{Acc})$ where $\mathcal{S}$ is a finite set of states, $\AP$ is a set of atomic propositions, $2^{\AP}$ is the alphabet of the automaton. $\delta: \mathcal{S} \times 2^{\AP} \rightarrow \mathcal{S} $ is the transition function and $s_I \in \mathcal{S}$ is the initial state. The acceptance condition $\textrm{Acc}$ is an accepting set of states $\textrm{Acc} \subseteq S$. Since $\varphi$ is co-safe, it is known that all infinite sequences that satisfy $\varphi$ have a finite \emph{good prefix}. Let $w = w_0 w_1 \dots \in {({\fat{2}^{\AP}})}^{\omega}$ be an infinite word in the language of the automaton such that $w \vDash \varphi$, then there exists $n\in \mathbb{N}$ such that $w_0,w_1,\dots w_n \vDash \varphi$. Hence, after reaching an accepting state $s \in \textrm{Acc}$, we can 'complete' the prefix by setting $\delta(s,\alpha) = s$ for all $\alpha \in 2^{\AP}$


\paragraph*{Product MDP}
Given an MDP $M=(\mathcal{X},\mathcal{U},p,\AP,L)$ and a specification DFA
$\mathcal{A}_{\varphi} = (\mathcal{S},s_I,2^{\AP}, \delta,\textrm{Acc})$, we can define a \emph{product
MDP}, $\mathcal{M} \defeq M \times\mathcal{A}_{\varphi}$, as $\mathcal{M}
:= (\mathcal{V},\mathcal{U}, \Delta,v_0, L_{\varphi},\textrm{Acc}_{\mathcal{M}})$ where
\begin{itemize}
	\item $\mathcal{V} = \mathcal{X} \times \mathcal{S}$;
	\item $\Delta: \mathcal{V} \times \mathcal{U} \rightarrow \dist{\mathcal{V}}$ is a probabilistic function such that $\Delta\left((x_{t+1},s_{t+1})\vert (x_t,s_t)\right) = p(x_{t+1} \vert x_t,u_t ) $ if $\delta(s_t,L(x_{t+1}))
		= s_{t+1}$;
	\item $v_0 = (x_0,s_I)$; is the initial state;
	\item $L_{\varphi} = L(x) \cup \{\textrm{acc}_\varphi\}$ if $L(x) \in \textrm{Acc}$ and $L(x)$ otherwise; and
	\item $\textrm{Acc}_{\mathcal{M}}$ is the set of all states where the new atomic proposition $\textrm{acc}_\varphi$ is true.
\end{itemize}
Simply, once a run $\rho$ in $\mathcal{M}_{\varphi}$ reaches a state labeled with the atomic proposition $\textrm{acc}_\varphi$, it satisfies the formula $\varphi$. We denote a run $\rho$ as satisfying $\varphi$ by $\rho \vDash \varphi$. Hence, the problem of finding a policy $q$ that maximizes the probability of satisfying a given co-safe LTL specification becomes a matter of synthesizing a policy to reach a state in $\textrm{Acc}_{\mathcal{M}}$. This is a reachability problem in an MDP and can be solved using value iteration. This results in a \emph{memoryless} policy in $\mathcal{M}_\varphi$. Intuitively, the DFA component states of the product MDP can be thought of a \emph{memory state}. From this policy we can construct a \emph{finite-memory} policy in $M$. For more details on this construction, we refer the reader to  \cite{forejt2011automated}.

\section{Problem Statement}
In this section, we present the class of MDPs we consider and we formulate the problem under study.

Let $M=(\mathcal{X},\mathcal{U},p,\AP,L)$ be a finite labeled MDP.
Let $\mu_{t}(x^{t}, u^{t-1})$ be the joint distribution defined recursively by the state transition probability $p(x_{t+1}|x_t, u_t)$ and a policy $q_t(u_t|x^t, u^{t-1})$ as
\begin{align}
&\mu_{t+1}(x^{t+1}, u^t) \nonumber \\
&=p_t(x_{t+1}|x_t, u_t)q_t(u_t|x^t, u^{t-1})\mu_t(x^t, u^{t-1}). \label{eqmu}
\end{align}
A transfer entropy MDP is a labeled MDP with a split state space $\mathcal{X} = \bar{\mathcal{X}} \times \tilde{\mathcal{X}} $. Formally, transfer entropy MDP is a tuple $M=(\bar{\mathcal{X}},\tilde{\mathcal{X}},\mathcal{U},p,\AP,L)$, where  $\bar{\mathcal{X}}$ denotes the \emph{expensive} state variables, whereas  $\tilde{\mathcal{X}}$ denotes the \emph{free} state variables. We assume that the cost of information transfer from $\{\bar{X}_t\}$ to $\{U_t\}$ given $\{\tilde{X}_t\}$ over the time horizon $0\leq t\leq T-1$ is proportional to the (causally conditioned) \emph{transfer entropy} defined in \eqref{eqdi}.

\begin{equation}
\label{eqdi}
I(\bar{X}^{T-1}\rightarrow U^{T-1}\| \tilde{X}^{T-1})=\sum_{t=0}^{T-1} I(\bar{X}^t; U_t|U^{t-1},\tilde{X}^t).
\end{equation}

where $I (\bar{X}^t;U_t  \vert U^{t-1},\tilde{X}^{t})$ \cite{cover2012elements} is the \emph{conditional mutual information} and can be explicitly written as
\[
I  (\bar{X}^t;U_t \vert U^{t-1}\!\!,\tilde{X}^{t}) = \!\! \sum_{\mathcal{X}^t, \mathcal{U}^t} \mu_t(x^t\!, u^{t-1})\log\frac{q_t(u_t|x^t, u^{t-1})}{\nu_t(u_t|\tilde{x}^t, u^{t-1})}.
\]

and $\nu_t(u_t|\tilde{x}^t, u^{t-1})$ is the conditional distribution obtained by conditioning and marginalizing the joint distribution $\mu_{t}(x^{t}, u^{t-1})$. More specifically, 
\begin{equation}
\nu_t(u_t|\tilde{x}^t, u^{t-1})=\sum_{\bar{\mathcal{X}}^t}\mu_t(\bar{x}^t|\tilde{x}^t, u^{t-1})q_t(u_t|x^t,u^{t-1}).
\label{eqdefnu}
\end{equation}

We note that the notion of \emph{directed information} is introduced by \cite{massey1990causality} based on \cite{marko1973bidirectional}, and its generalization with causal conditioning by \cite{kramer1998causal}. Intuitively, \eqref{eqdi} can be understood as the information flow from a random process $\{\bar{X}_t\}$ to $\{U_t\}$ given $\{\tilde{X}_t\}$ as side information.

  
  To motivate this formulation, we present an example in which such a construction is natural.

Consider a Mars rover for the upcoming Mars 2020 mission~\cite{landau2015helicopter}.  Mars rovers have to complete their tasks in mostly unknown environments. Limited a priori knowledge of the terrain and possible obstacles can be provided from low-resolution satellite imagery. This information, however, is often not enough for decision-making as was evidenced by the Curiosity rover which suffered punctures, due to the unexpected presence of jagged, immobile, rocks embedded in the terrain. For the Mars 2020 mission, a helicopter has been proposed to act as a scout \cite{landau2015helicopter} to assist with planning. Figure \ref{fig:mars2020} shows an artists' rendering of the helicopter flying ahead to scout. The helicopter can then transmit information of the terrain back to the rover which is used for planning to satisfy the mission specification. 

\begin{figure}
\centering
\includegraphics[scale=0.36]{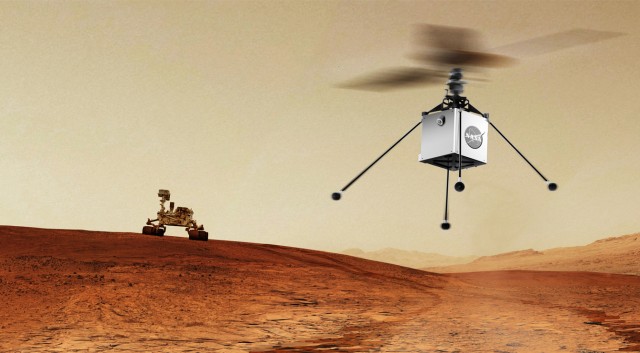}
\caption{Artist's rendering of the proposed helicopter to scout for the Mars rover. The helicopter can fly ahead and send information back to the rover about the presence of any obstacles. \cite{landau2015helicopter}.}\label{fig:mars2020}
\end{figure}

We model the dynamics of the rover in the Martian environment as an MDP with split state space $\mathcal{X} = \bar{\mathcal{X}} \times \tilde{\mathcal{X}}$. At every time step $t$, the component $\tilde{X}_t$ of the state vector is immediately available to the autonomous agent, e.g, from onboard sensors of the rover, while the component $\bar{X}_t$ is only available from a remote sensor, e.g, the scouting helicopter. We are thus interested in finding a policy $q_t(u_t|x^t,u^{t-1})$ that minimizes the information transfer from $\bar{X}$ to $U$. This information flow is captured by the transfer entropy cost. We can represent this system using a feedback control architecture shown in Figure~\ref{fig:NCS}. 

Additionally, the rover has to satisfy specification $\varphi$, given by a co-safe LTL formula is to a given threshold $0 \leq D \leq 1$ in the probability. Let
$\mathbb{P}_{q_t}^T(x_0 \vDash \varphi)$ be the probability of satisfaction of $\varphi$ by policy $q_t$ in finite time horizon T from initial state $x_0$. We define $J(X^{T},U^{T-1}) \defeq 1 - \mathbb{P}_{q_t}^T(x_0 \vDash \varphi)$ to be the probability of failure.


The main problem we study in this paper can be described as
\begin{align}
\min_{\{q_t(u_t|x^t,u^{t-1})\}_{t=0}^{T-1}} & I(\bar{X}^{T-1}\rightarrow U^{T-1}\| \tilde{X}^{T-1}) \nonumber \\
& \textrm{s.t } J(X^{T},U^{T-1}) \leq 1 - D. \label{eqn:constopt}
\end{align}

\begin{figure}[h]
\centering{
{\includegraphics[scale=0.34]{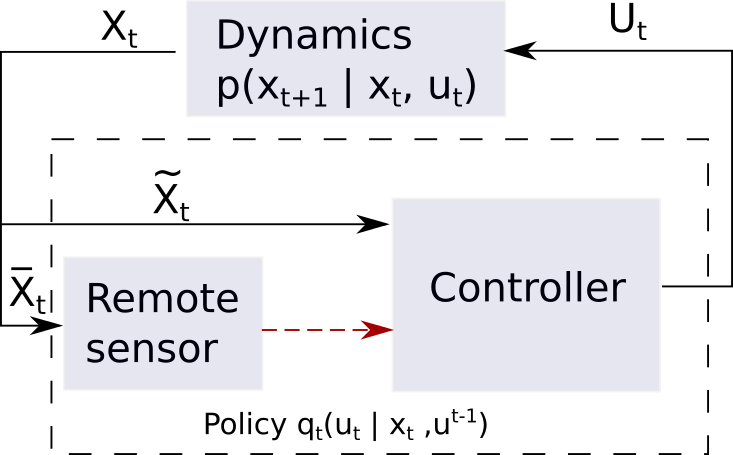}}
\caption{Example of a feedback control architecture with part of the statespace $\bar{X}_t$ being measured remotely. The red arrow indicates a band limited communications channel so transmissions are restricted.}
\label{fig:NCS}
}
\end{figure}

\section{Incorporating Temporal Logic Constraints}
In this section, we demonstrate how to take into account high-level mission specifications in terms of a co-safe LTL formula and cast the constrained control problem  into the form of optimization problem~\eqref{eqn:constopt}.


Consider a finite labeled MDP with transfer entropy cost $M=(\mathcal{\hat{X}},\mathcal{U},p,\AP,L)$ where, as before, the state space of $M$ is split into expensive and cheap to measure state variables $\mathcal{\hat{X}} = \mathcal{\bar{X}}_e \times \mathcal{\tilde{X}}_f$. We are additionally given a specification DFA $\mathcal{A}_{\varphi} = (\mathcal{S},s_I,2^{\AP}, \delta,\textrm{Acc})$, and finite time horizon $T$. The product transfer entropy MDP is $\mathcal{M}\defeq (\mathcal{V},\mathcal{U}, \Delta,v_0,L_{\varphi},\textrm{Acc}_{\mathcal{M}})$.  Hence, we will have the state space $\mathcal{V} = (\mathcal{\bar{X}}_e \times \mathcal{\tilde{X}}_f) \times S$. Now, for notational simplicity, we set $\mathcal{X} = \mathcal{V}$, the free to measure state $\mathcal{\tilde{X}} = (\mathcal{\tilde{X}}_f,\mathcal{S})$ (we assume without loss of generality that the state in the automaton is freely known), and the expensive to measure state $\mathcal{\bar{X}} = \mathcal{\bar{X}}_e$. Let $X = (\bar{X}_e,\tilde{X}_f,S)$ and $x = (\bar{x}_f,\tilde{x}_s,s)$ be defined similarly. Thus, our state space is now $\mathcal{X}= \mathcal{\bar{X}} \times \mathcal{\tilde{X}}$ with random variable $X = (\bar{X},\tilde{X})$. 

We define a state-action cost in the product MDP in the following way. We define a function $c_t(x_t,u_t,x_{t+1})$, such that for every transition from $x_t$ to $x_{t+1}$, the cost is $0$ if neither $x_t$ or $x_{t+1}$ are in $\textrm{Acc}_{\mathcal{M}}$. The cost is $-1$ if $x_t \notin \textrm{Acc}_{\mathcal{M}}$ and $x_{t+1} \in \textrm{Acc}_{\mathcal{M}}$ and no state in $\textrm{Acc}_{\mathcal{M}}$ has been visited prior to reaching $x_t$.  Intuitively, minimizing this quantity will result in a policy $q$ that maximizes the probability of reaching $\textrm{Acc}_{\mathcal{M}}$ and hence, equivalently will maximize the probability of satisfying the temporal logic specification in $M$. The expected accumulated reward from state $x_0$ given by $\sum_{t=0}^{T-1}\mathbb{E}\{c_t(x_t,u_t,x_{t+1})\}$ will equal the \emph{negative} of the reachability probability to the target set $C$ in $T-$steps \ie we have
\vspace{-0.2cm}
\begin{align}\label{eqn:cost}
\sum_{t=0}^{T-1}\mathbb{E}\{c_t(x_t,u_t,x_{t+1})\} = -h^{\leq T}(x,\textrm{Acc}_{\mathcal{M}}).
\end{align}
\vspace{-0.3cm}

Setting $J(X^T,U^{T-1}) = \sum_{t=0}^{T-1}\mathbb{E}\{c_t(x_t,u_t,x_{t+1})\}$, we obtain an equivalent formulation of \eqref{eqn:constopt} with cost function $c_t$ as defined earlier.

\paragraph*{Remark} The constrained optimization problem in equation (\ref{eqn:constopt}) can be written as a \emph{Lagrangian relaxation} in the following way

\begin{align}\label{eqmainproblem}
\min_{\{q_t\}_{t=1}^T}  J(X^{T},U^{T-1}) + \beta I(\bar{X}^T \rightarrow U^{T-1}||\bar{X}^T)
\end{align}
where $\beta$ is a positive constant.

Intuitively, this means that we want to minimize the information flow from the state variables in $\mathcal{\bar{X}}$ subject to the constraint on the accumulated cost $J$. Using the cost function defined in \eqref{eqn:cost}, this constrains the probability of not satisfying the specification. The rest of the paper will deal with \eqref{eqmainproblem} 




\section{Optimality Conditions}
In this section, we derive a necessary optimality condition for \eqref{eqmainproblem}.
The result in this section generalizes \cite{charalambous2014optimization, stavrou2015information} to conditional directed information. In the following derivation, we assume $\beta=1$ for simplicity.
First, we rewrite the objective function in \eqref{eqmainproblem} explicitly as a function of $q$ and $\nu$. Using the definition of the causally conditioned directed information \eqref{eqdi}, the objective function can be written in a stage-additive form as
$f(q_0, ... , q_{T-1}; \nu_0, ... , \nu_{T-1})=\sum_{t=0}^{T-1} \ell_t$ with
\begin{align*}
\ell_t=&\sum\nolimits_{\mathcal{X}^t}\sum\nolimits_{\mathcal{U}^t} \mu_t(x^t, u^{t-1})q_t(u_t|x^t, u^{t-1}) \\
&\times \left(  \sum\nolimits_{\mathcal{X}_{t+1}}p(x_{t+1}|x_t, u_t)c_t(x_t, u_t, x_{t+1}) \right. \\
& \hspace{12ex}+\left.  \log \frac{q_t(u_t|x^t, u^{t-1})}{\nu_t(u_t|\tilde{x}^t, u^{t-1})}\right)
\end{align*}
where $\mu_t$ is recursively defined by $q_t$ via  \eqref{eqmu}. 
To analyze our cost function $f(q; \nu)$ to be minimized, we note the following simple lemma, which is a straightforward generalization of \cite[Theorem 4(b)]{blahut1972computation}. 
\begin{lemma}
For fixed $q$, $f(q;\nu)$ is minimized by \eqref{eqdefnu}.
\end{lemma}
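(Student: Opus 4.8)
The plan is to exploit two structural features of the stage-additive objective $f(q;\nu)=\sum_{t=0}^{T-1}\ell_t$: that the occupation measures $\mu_t$ are fixed once $q$ is fixed (they are defined by the recursion \eqref{eqmu} in terms of $q$ and $p$ alone), and that $\nu$ enters only through the logarithmic term. First I would note that within each $\ell_t$ the transition-cost piece $\sum_{\mathcal{X}_{t+1}}p(x_{t+1}|x_t,u_t)c_t(x_t,u_t,x_{t+1})$ does not involve $\nu$ at all, and that $\nu_{t'}$ for $t'\neq t$ never appears in $\ell_t$. Hence, for fixed $q$, the minimization over $\nu=(\nu_0,\dots,\nu_{T-1})$ decouples completely across time, and it suffices to minimize each $\ell_t$ separately over the single conditional distribution $\nu_t(\cdot\,|\,\tilde{x}^t,u^{t-1})$, subject only to the constraint that it be a genuine probability distribution, i.e. $\sum_{u_t}\nu_t(u_t|\tilde{x}^t,u^{t-1})=1$ for every $(\tilde{x}^t,u^{t-1})$.

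The key manipulation is a ``completion of the logarithm'' trick. Let $\nu_t^{\star}$ denote the candidate minimizer given by the right-hand side of \eqref{eqdefnu}. Inside the $\nu$-dependent term of $\ell_t$ I would write $\log\frac{q_t}{\nu_t}=\log\frac{q_t}{\nu_t^{\star}}+\log\frac{\nu_t^{\star}}{\nu_t}$, which splits the term into a piece $\sum\mu_t q_t\log\frac{q_t}{\nu_t^{\star}}$ that is independent of the free variable $\nu_t$, plus the cross term $\sum_{\mathcal{X}^t,\mathcal{U}^t}\mu_t(x^t,u^{t-1})q_t(u_t|x^t,u^{t-1})\log\frac{\nu_t^{\star}(u_t|\tilde{x}^t,u^{t-1})}{\nu_t(u_t|\tilde{x}^t,u^{t-1})}$. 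Because the surviving log-ratio depends on $(\tilde{x}^t,u^{t-1},u_t)$ but not on $\bar{x}^t$, I can carry out the summation over $\bar{x}^t$ on the weights alone. The identity $\sum_{\bar{x}^t}\mu_t(x^t,u^{t-1})q_t(u_t|x^t,u^{t-1})=\mu_t(\tilde{x}^t,u^{t-1})\,\nu_t^{\star}(u_t|\tilde{x}^t,u^{t-1})$ then collapses the cross term into $\sum_{\tilde{x}^t,u^{t-1}}\mu_t(\tilde{x}^t,u^{t-1})\,D_{\mathrm{KL}}\!\left(\nu_t^{\star}(\cdot|\tilde{x}^t,u^{t-1})\,\|\,\nu_t(\cdot|\tilde{x}^t,u^{t-1})\right)$, a $\mu_t$-weighted sum of relative entropies.

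To finish, I would invoke the non-negativity of relative entropy (Gibbs' inequality), which is nonnegative and vanishes pointwise in $(\tilde{x}^t,u^{t-1})$ precisely when $\nu_t=\nu_t^{\star}$. Since the remaining pieces of $\ell_t$ are constant in $\nu_t$, this shows each $\ell_t$, and therefore $f(q;\nu)$, is minimized exactly by the choice \eqref{eqdefnu}. I expect the only point requiring care to be the marginalization identity of the previous paragraph: it is not an extra assumption but a direct consequence of $\sum_{u_t}q_t(u_t|x^t,u^{t-1})=1$, which forces $\sum_{\bar{x}^t}\mu_t(x^t,u^{t-1})=\mu_t(\tilde{x}^t,u^{t-1})$ and makes $\nu_t^{\star}$ the correctly normalized conditional marginal. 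One should also confirm that $\nu_t^{\star}$ is indeed a valid conditional distribution so that Gibbs' inequality applies term by term; this again follows from the same normalization, completing the argument.
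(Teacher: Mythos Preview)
Your argument is correct and is precisely the standard Gibbs-inequality proof underlying \cite[Theorem~4(b)]{blahut1972computation}, which the paper merely cites without spelling out. The only cosmetic quibble is your justification of the marginalization identity: $\sum_{\bar{x}^t}\mu_t(x^t,u^{t-1})=\mu_t(\tilde{x}^t,u^{t-1})$ is simply the definition of a marginal and does not rely on $\sum_{u_t}q_t=1$, but this does not affect the validity of the proof.
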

This lemma implies that, although $q$ and $\nu$ must satisfy \eqref{eqdefnu} (we write $\nu(q)$ to emphasize that $\nu_t$ for $0\leq t \leq T-1$ is a function of $q_t$ for $0\leq t \leq T-1$), the constraint \eqref{eqdefnu} will be automatically satisfied by solving $\min_{q, \nu} f(q; \nu)$.
In particular, if $q^*$ is an optimal solution to \eqref{eqmainproblem}, and if $\nu^*=\nu(q^*)$, then $(q^*, \nu^*)$ is an optimal solution to $\min_{q, \nu} f(q; \nu)$.
Since optimality of $(q^*, \nu^*)$ implies coordinate-wise optimality of $q^*$, this implies 
\begin{equation}
\label{eqqstarargmin}
q^* \in \argmin_q f(q;\nu^*).
\end{equation}
Thus, if $q^*$ is an optimal solution to \eqref{eqmainproblem}, it necessarily satisfies $\nu^*=\nu(q^*)$ and \eqref{eqqstarargmin} simultaneously. The next lemma shows that the optimal solution to the right hand side of \eqref{eqqstarargmin} can be obtained analytically.
\begin{lemma}
\label{lem52}
For fixed $\nu^*$, define sequences $\rho_t^*$ and $\phi_t^*$ for $0\leq t \leq T-1$ backward in time by
\begin{align*}
\phi_t^*(x^t, u^{t-1})=&\sum\nolimits_{\mathcal{U}_t} \nu_t^*(u_t|\tilde{x}^t, u^{t-1})\exp\{-\rho_t^*(x^t, u^t)\} \\
\rho_t^*(x^t, u^t)=&\sum\nolimits_{\mathcal{X}_{t+1}}p(x_{t+1}|x_t, u_t) \\
&\times \{c_t(x_t, u_t, x_{t+1})-\log \phi_{t+1}(x^{t+1}, u^t)\}
\end{align*}
with terminal condition $\phi_T^*(x^T, u^{T-1})=1$. Then, the optimal solution to $\min_q f(q; \nu^*)$ satisfies
\begin{equation}
\label{eqoptqt}
q_t^*(u_t|x^t, u^{t-1})=\frac{\nu_t^*(u_t|\tilde{x}^t, u^{t-1})\exp\{-\rho_t^*(x^t, u^t)\}}{\phi_t^*(x^t, u^{t-1})}
\end{equation}
$\mu_t$-almost everywhere for each $0\leq t \leq T-1$.
\end{lemma}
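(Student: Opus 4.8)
The plan is to verify directly that the candidate policy $q^*$ in \eqref{eqoptqt} is the global minimizer of the map $q \mapsto f(q;\nu^*)$ by a completion-of-squares / telescoping argument, rather than by checking first-order stationarity. First I would rewrite $f(q;\nu^*)$ as a single expectation over the trajectory distribution $\mu$ that $q$ induces through \eqref{eqmu}, namely $f(q;\nu^*) = \mathbb{E}_\mu\big[\sum_{t=0}^{T-1}\big(c_t(X_t,U_t,X_{t+1}) + \log(q_t/\nu_t^*)\big)\big]$. Here the tower property lets me replace $\mathbb{E}_\mu[c_t(X_t,U_t,X_{t+1})]$ by the $p$-weighted inner sum appearing in $\ell_t$ (conditioning on $(X_t,U_t)$, the successor $X_{t+1}$ is drawn from $p$), and the same reduction will apply to the cost-to-go terms introduced below.

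Next I introduce the abbreviation $V_t \defeq -\log\phi_t^*$, so that the terminal condition $\phi_T^*=1$ gives $V_T=0$ and the backward recursion in the lemma reads $\rho_t^*(x^t,u^t)=\sum_{x_{t+1}} p(x_{t+1}|x_t,u_t)\{c_t + V_{t+1}\}$; intuitively $V_t$ is the optimal cost-to-go and this is its Bellman equation, but the telescoping proof needs no separate dynamic-programming optimality result. The crux is a per-stage identity obtained by taking logarithms in \eqref{eqoptqt}: since $\log q_t^* = \log\nu_t^* - \rho_t^* + V_t$, for an arbitrary policy $q$, and abbreviating $p=p(x_{t+1}|x_t,u_t)$ and $c_t=c_t(x_t,u_t,x_{t+1})$, one has
\[
\log\frac{q_t}{\nu_t^*} + \sum_{x_{t+1}} p\, c_t \;=\; \log\frac{q_t}{q_t^*} + V_t(x^t,u^{t-1}) - \sum_{x_{t+1}} p\, V_{t+1}(x^{t+1},u^t).
\]
Substituting this into the trajectory form of $f$ and summing over $t$, the value terms telescope under $\mathbb{E}_\mu$ to $\mathbb{E}_\mu[V_0(X_0)] - \mathbb{E}_\mu[V_T] = \mathbb{E}_\mu[V_0(X_0)]$, which involves only the fixed initial marginal of $X_0$ together with $\nu^*$, $p$ and $c$, hence is a constant independent of $q$. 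What remains is $\sum_t \mathbb{E}_\mu[\log(q_t/q_t^*)]$, and conditioning each summand on $(X^t,U^{t-1})$ identifies it with $\sum_t \mathbb{E}\big[D_{\mathrm{KL}}\big(q_t(\cdot|X^t,U^{t-1}) \,\|\, q_t^*(\cdot|X^t,U^{t-1})\big)\big] \ge 0$ by nonnegativity of relative entropy (Gibbs' inequality).

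It then follows that $f(q;\nu^*) \ge \mathbb{E}_\mu[V_0(X_0)] = f(q^*;\nu^*)$, with equality precisely when every conditional relative-entropy term vanishes, i.e. $q_t = q_t^*$ at every history $(x^t,u^{t-1})$ carrying positive $\mu_t$-probability; this is exactly the stated ``$\mu_t$-almost everywhere'' qualifier, since on zero-probability histories $q_t$ is unconstrained and does not affect $f$. I expect the main obstacle to be the bookkeeping in the per-stage identity: one must line up the backward definitions of $\rho_t^*$ and $\phi_t^*$ so that the residual collapses to a manifestly nonnegative relative entropy while the value terms telescope cleanly, and the tower-property reductions that match the $p$-weighted sums in the definitions to the full-trajectory expectation must be stated carefully (though they are routine). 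A secondary point worth flagging is that although $\mu$ itself depends on $q$, this causes no trouble, because the only surviving $q$-independent term $\mathbb{E}_\mu[V_0(X_0)]$ depends on $q$ solely through the fixed marginal of $X_0$.
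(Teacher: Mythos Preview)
Your argument is correct and takes a genuinely different route from the paper's. The paper proceeds by backward induction: at each stage $t$ it freezes the later-stage policies at the form \eqref{eqoptqt}, uses the induction hypothesis to collapse $\sum_{k\ge t}\ell_k$ into a single-stage problem of Blahut type, and then invokes \cite[Theorem 4(c)]{blahut1972computation} (restated as Lemma~\ref{lemblahut2}) to solve that stage in closed form; a separate computation (statement~(b)) tracks the value $-\sum \mu_t\log\phi_t^*$ to close the induction. You instead give a one-shot verification: the identity $\log(q_t/\nu_t^*)+\sum_{x_{t+1}}p\,c_t=\log(q_t/q_t^*)+V_t-\sum_{x_{t+1}}p\,V_{t+1}$ (which follows immediately from the backward definitions once one sets $V_t=-\log\phi_t^*$) lets the value terms telescope and leaves $f(q;\nu^*)=\mathbb{E}[V_0(X_0)]+\sum_t \mathbb{E}_{\mu_t}\!\big[D_{\mathrm{KL}}(q_t\|q_t^*)\big]$, from which both the minimizer and the optimal value are read off directly via Gibbs' inequality. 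Your approach is more self-contained (no external lemma needed), exhibits the optimal value $\mathbb{E}[V_0(X_0)]$ explicitly, and makes the ``$\mu_t$-almost everywhere'' qualifier fall out of the KL equality case; the paper's stage-wise argument, by contrast, yields the slightly finer observation that $q_t^*$ is optimal at stage $t$ even for arbitrary past policies $q_0,\dots,q_{t-1}$, though this too can be recovered from your decomposition by noting that only the $t$-th KL term depends on $q_t$.
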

\begin{proof}
See Appendix \ref{sec:prf}
\end{proof}
The main result of this section is thus summarized as follows.
\begin{theorem}
An optimal solution $q^*$ to \eqref{eqmainproblem} necessarily satisfies the following set of nonlinear equations
\begin{subequations}
\label{eqoptcondition}
\begin{align}
\mu_{t+1}^*(x^{t+1}, u^t)=& \;p(x_{t+1}|x_t, u_t)q_t^*(u_t|x^t, u^{t-1}) \nonumber \\
&\hspace{7ex}\times \mu_t^*(x^t,u^{t-1})  \\
\nu_t^*(u_t|\tilde{x}^t, u^{t-1})=&\sum_{\bar{\mathcal{X}}^t}\mu^*_t(\bar{x}^t|\tilde{x}^t, u^{t-1})q_t^*(u_t|x^t,u^{t-1}) \\
\rho_t^*(x^t, u^t)=&\sum_{\mathcal{X}_{t+1}}p(x_{t+1}|x_t, u_t) \{c_t(x_t, u_t, x_{t+1}) \nonumber \\
&\hspace{7ex}-\log \phi_{t+1}^*(x^{t+1}, u^t)\} \\
\phi_t^*(x^t, u^{t-1})=&\sum_{\mathcal{U}_t} \nu_t^*(u_t|\tilde{x}^t, u^{t-1})\nonumber  \\
&\hspace{7ex}\times \exp\{-\rho_t^*(x_t, u^t)\} \\
q_t^*(u_t|x^t, u^{t-1})=&\frac{\nu_t^*(u_t|\tilde{x}^t, u^{t-1})\exp\{-\rho_t^*(x^t, u^t)\}}{\phi_t^*(x^t, u^{t-1})}
\end{align}
\end{subequations}
for each $0\leq t\leq T-1$ with the given initial condition $\mu_0^*$ and the terminal condition $\phi_T^*(x^T, u^{T-1})=1$.
\end{theorem}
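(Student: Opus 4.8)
The plan is to recognize that this theorem is essentially a consolidation of the two preceding lemmas together with the defining recursion for $\mu$, so the proof reduces to justifying why each of the five equations grouped under \eqref{eqoptcondition} must hold at an optimizer $q^*$. I would proceed equation by equation, invoking the already-established structural results rather than re-deriving anything.

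First I would dispatch the two equations that need no optimization argument. The first equation in \eqref{eqoptcondition} is nothing but the recursion \eqref{eqmu} evaluated along the distribution induced by $q^*$; it holds for \emph{every} policy, optimal or not, since $\mu^*$ is by definition the joint distribution generated by $q^*$ and the transition kernel $p$. The second equation in \eqref{eqoptcondition} is the definition \eqref{eqdefnu} of $\nu_t$. To pin $\nu^*$ to $\nu(q^*)$ I would appeal to the first lemma: for fixed $q$, $f(q;\nu)$ is minimized by $\nu = \nu(q)$, so that $\min_\nu f(q;\nu) = f(q;\nu(q))$ recovers the genuine transfer-entropy-plus-cost objective of \eqref{eqmainproblem}. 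Consequently $\min_{q,\nu} f(q;\nu) = \min_q f(q;\nu(q))$ coincides with the minimum of \eqref{eqmainproblem}, and an optimizer $q^*$ of \eqref{eqmainproblem} yields a joint optimizer $(q^*,\nu^*) = (q^*,\nu(q^*))$ of $f$, which is exactly the relation $\nu^* = \nu(q^*)$.

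Next I would extract the remaining three equations from Lemma \ref{lem52}. The bridge is the coordinate-wise optimality of the jointly optimal pair: since $(q^*,\nu^*)$ minimizes $f(q;\nu)$ over both arguments, fixing the second argument at $\nu^*$ forces $q^*$ to minimize $f(\cdot;\nu^*)$, \ie \eqref{eqqstarargmin} holds. I would then apply Lemma \ref{lem52} verbatim with this fixed $\nu^*$: it states that the minimizer of $\min_q f(q;\nu^*)$ has the closed form \eqref{eqoptqt}, which is the fifth equation in \eqref{eqoptcondition}, with the auxiliary sequences $\rho_t^*$ and $\phi_t^*$ generated backward in time by precisely the recursions appearing as the third and fourth equations in \eqref{eqoptcondition} and carrying the same terminal condition $\phi_T^*(x^T,u^{T-1}) = 1$. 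Collecting all five relations completes the argument.

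The substantive work lives entirely in Lemma \ref{lem52}, whose proof is deferred to the appendix; given that lemma, the theorem is a direct corollary and I do not expect a genuine obstacle here. The only point demanding care is the reduction in the previous two paragraphs, namely verifying that joint minimization of $f(q;\nu)$ is equivalent to solving \eqref{eqmainproblem} and that joint optimality legitimately implies the coordinate-wise condition \eqref{eqqstarargmin}. Both hinge on the first lemma, which guarantees that the inner minimization over $\nu$ is attained at $\nu(q)$ and leaves the objective equal to the original one; once that equivalence is secured, the coordinate-wise implication is immediate and the remainder is bookkeeping.
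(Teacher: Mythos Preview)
Your proposal is correct and matches the paper's own argument essentially line for line: the paper presents the theorem as a summary of the preceding development, using the first lemma to identify $\nu^*=\nu(q^*)$ and pass to the joint problem $\min_{q,\nu}f(q;\nu)$, then invoking coordinate-wise optimality to obtain \eqref{eqqstarargmin}, and finally applying Lemma~\ref{lem52} to extract the backward recursions for $\rho_t^*,\phi_t^*$ and the closed form for $q_t^*$. The only equations not covered by the lemmas are the recursion for $\mu^*$ and the definition of $\nu^*$, which you correctly note are definitional.
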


\subsection{Forward-backward algorithm}
The optimality condition \eqref{eqoptcondition} is a set of coupled non-linear equations with respect to the variables $\mu^{*},\nu^*,\rho^*,\phi^*,q^*$. In order to solve these we propose a numeric forward-backward algorithm. Firstly, note that if $\rho^*,\phi^*,q^*$ are known, $\mu^{*},\nu^*$ can be solved forwards in time. Similarly, if $\mu^{*},\nu^*$ are known then the others can be solved backwards in time. 

To solve this, we do the following. First we make a guess for each of the variables. We then solve the forward-time equations for $\mu^{*},\nu^*$. We use these values to then solve for $\rho^*,\phi^*,q^*$ backwards in time. This process is repeated until convergence.  This can be viewed as a generalization of the Arimoto-Blahut algorithm \cite{Blahut72}.

\paragraph*{Remark} We note that the problem formulation and derived equations are infinite-history, \ie they depend on the state and control actions from $t=0$ to $t=T-1$. In order to make this computationally tractable to solve, we modify the algorithm to search for the best policy of the form $q_t^*(u_t|x_t,u^{t-1}_{t-n})$ with some finite $n$. 
We refer the reader to \cite{takashi17} for more details on the similar algorithm and its convergence results. 



\section{Numerical Results}\label{sec:exp}
We consider a scenario where the rover is tasked with collecting samples from a specific region. The environment is modeled as an MDP as motion can be stochastic, i.e, slippage can occur. The mission is specified as a co-safe LTL specification. 

We analyze two different case studies. In the first experiment the rover has to plan around a moving obstacle, but the knowledge of the location of the moving obstacle is penalized. In the second experiment, the rover has some a priori knowledge of the terrain, but there is a cost to using any additional information. 

\subsection{Moving obstacle}
We solve the motion planning problem under sensing constraints in a gridworld as shown in \ref{fig:casestudy}\subref{fig:exp1}. Consider a scenario where rover is tasked with reaching the goal state in green whilst avoiding collisions with the red static obstacles and an orange moving obstacle that moves in the area shown. For example, the helicopter can be completing a separate mission and we do not want the rover and helicopter to collide, but we also want to limit their communication to conserve power. Hence, we treat the helicopter as a moving obstacle and add an information cost to its position.

We express this in LTL as $\lnot$'crash' $\LTLuntil$ 'goal'. The atomic proposition 'crash' is true in the red static obstacles and when the state of the rover is the same as the state of the moving obstacle. The atomic proposition 'goal' is true in the green cell. The DFA representation is shown in Figure~\ref{fig:casestudy}\subref{fig:dra}.
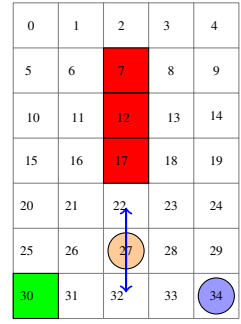
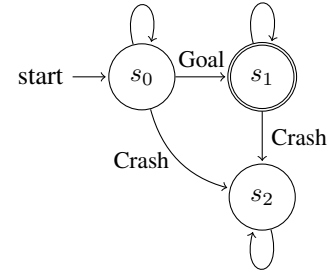
\begin{figure}
\subfloat[Gridworld with moving obstacle \label{fig:exp1}]{
\begin{tikzpicture}[scale=1.2]
\draw[step=0.5cm,color=gray] (-1.5,-2.5) grid (1,1);
\filldraw[fill=red,draw=black] (0,0.5) rectangle (-0.5,0);
\filldraw[fill=red,draw=black] (0,0) rectangle (-0.5,-0.5);
\filldraw[fill=red,draw=black] (0,-0.5) rectangle (-0.5,-1);
\filldraw[fill=green,draw=black] (-1.5,-2.5) rectangle (-1,-2.0);
\filldraw[fill=blue!40!white,draw=black] (+0.75,-2.25) circle (0.2cm);
\filldraw[fill=orange!40!white,draw=black] (-0.25,-1.75) circle (0.2cm);
\draw[blue,thick,->] (-0.25,-1.73) -> (-0.25,-1.27);
\draw[blue,thick,->] (-0.25,-1.73) -> (-0.25,-2.21);
\node at (-1.30,+0.75) {\tiny{0}};
\node at (-0.80,+0.75) {\tiny{1}};
\node at (-0.30,+0.75) {\tiny{2}};
\node at (0.20,+0.75) {\tiny{3}};
\node at (0.73,+0.75) {\tiny{4}};
\node at (-1.33,+0.25) {\tiny{5}};
\node at (-0.85,+0.25) {\tiny{6}};
\node at (-0.3,+0.25) {\tiny{7}};
\node at (0.25,+0.25) {\tiny{8}};
\node at (0.75,+0.25) {\tiny{9}};
\node at (-1.28,-0.27) {\tiny{10}};
\node at (-0.78,-0.27) {\tiny{11}};
\node at (-0.28,-0.27) {\tiny{12}};
\node at (0.28,-0.27) {\tiny{13}};
\node at (0.75,-0.25) {\tiny{14}};
\node at (-1.3,-0.75) {\tiny{15}};
\node at (-0.8,-0.75) {\tiny{16}};
\node at (-0.3,-0.75) {\tiny{17}};
\node at (0.25,-0.75) {\tiny{18}};
\node at (0.75,-0.75) {\tiny{19}};
\node at (-1.35,-1.25) {\tiny{20}};
\node at (-0.85,-1.25) {\tiny{21}};
\node at (-0.32,-1.25) {\tiny{22}};
\node at (0.25,-1.25) {\tiny{23}};
\node at (0.75,-1.25) {\tiny{24}};
\node at (-1.35,-1.75) {\tiny{25}};
\node at (-0.85,-1.75) {\tiny{26}};
\node at (-0.25,-1.75) {\tiny{27}};
\node at (0.25,-1.75) {\tiny{28}};
\node at (0.75,-1.75) {\tiny{29}};
\node at (-1.35,-2.25) {\tiny{30}};
\node at (-0.85,-2.25) {\tiny{31}};
\node at (-0.35,-2.25) {\tiny{32}};
\node at (0.25,-2.25) {\tiny{33}};
\node at (0.75,-2.25) {\tiny{34}};
\end{tikzpicture}
}
\subfloat[Specification DFA\label{fig:dra}]{
	\begin{tikzpicture}[shorten >=1pt,node distance=1.6cm,on grid]
	\node[state,initial]   (q_0)                {$s_0$};
	\node[state,accepting]           (q_1) [right=of q_0] {$s_1$};
	\node[state] (q_2) [below=of q_1] {$s_2$};
	\path[->] (q_0) edge                node [above] {\small Goal} (q_1)
	edge [loop above]   node         {} ()
	edge [bend right]   node [left] {\small Crash} (q_2)
	(q_1) edge                node [right] {\small Crash} (q_2)
	edge [loop above]         node {} (q_1)
	(q_2) edge [loop below]   node {} (q_2);
	\end{tikzpicture}
}
\caption{Gridworld and DFA  with $\textrm{Acc}_{\mathcal{M}} = (s_1)$ depicting a scenario where agent in blue has to reach green target cell without crashing into the red static obstacles or the orange moving obstacle.}\label{fig:casestudy}
\end{figure}

The rover has the choice of moving in 4 directions - North, South, East, and West or staying still. The motion is stochastic, i.e., it has a probability of slip. For example, if it chooses to move north, it has a probability to 'slip' (due to terrain effects like running sand) and move to a state north east or north west. 

The state space of the MDP is $(x,y,x_{obs},y_{obs})$ where $(x,y)$ is the position of the rover and $(x_{obs},y_{obs})$ is the position of the moving obstacle. We assume that the state of the moving obstacle to be expensive to observe. Formally, we let $\mathcal{\tilde{X}} = (x,y)$ and $\mathcal{\bar{X}} = (x_{obs},y_{obs})$. 

Since there is a probability to slip, the agent has a non-zero probability of crashing and not satisfying the specification if it goes the long way around the wall. If the agent knows the position of the moving obstacle at all times, it can plan to avoid collision, and hence the shorter path will have the higher probability of satisfaction. Intuitively, we expect to see if that we set the $\beta$ parameter high, \ie if the cost of information is high, the agent will go the long way around the wall as it will be too expensive to observe the moving obstacle. We use a time horizon $T = 25$ and test for $\beta = 0.5$ and $\beta=5$.

\begin{figure}
	\subfloat[$\beta = 0.5$\label{simple-grid}]{
	\includegraphics[scale=.5]{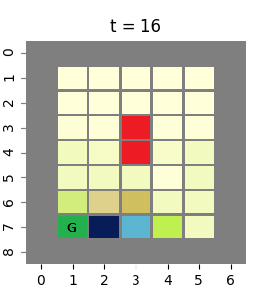}}
	\subfloat[$\beta = 5$\label{fig:simple-transitions}]{
		\includegraphics[scale=0.5]{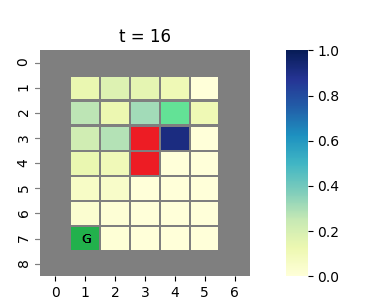}
	}
	\caption{Probability distribution of the agent after 16/25 timesteps. a) has a low $\beta$ which means information cost is low while b) has a high $\beta$ and hence high cost on information}
	\label{fig:expres}
\end{figure}

%
%
%
%

Figure \ref{fig:expres} shows the probability distributions of the agent at a specific time $t=16$. Clearly, in the case where $\beta=0.5$, the agent is able to go through the region where the moving obstacle operates. However, when we increase the cost of information, the agent moves around the static obstacles.

\subsection{Static obstacles}\label{sec:marsexp}
Now, we present the example of the Mars rover navigating in the presence of static obstacles. Figure \ref{fig:mars} shows an example of a simple map of the environment that can obtained from a satellite image. This gives us a rough knowledge of the environment. We know the red region is impassable terrain, e.g. a jagged boulder. We also know that there is a region with a high density of obstacles and one with a low density of obstacles. All other regions are assumed to be obstacle free.

The helicopter can send information on the exact locations of obstacles to the rover to assist in path planning, however, we assign a cost to this information.

\begin{figure}[!t]

	\centering
\subfloat[Simple martian environment for case study. The red region is already known as impassable terrain. Additionally, two areas are identified as having a high and low probability of obstacles respectively\quad \quad \quad \quad \quad \quad \quad \label{fig:mars}]{
	\includegraphics[scale=0.12]{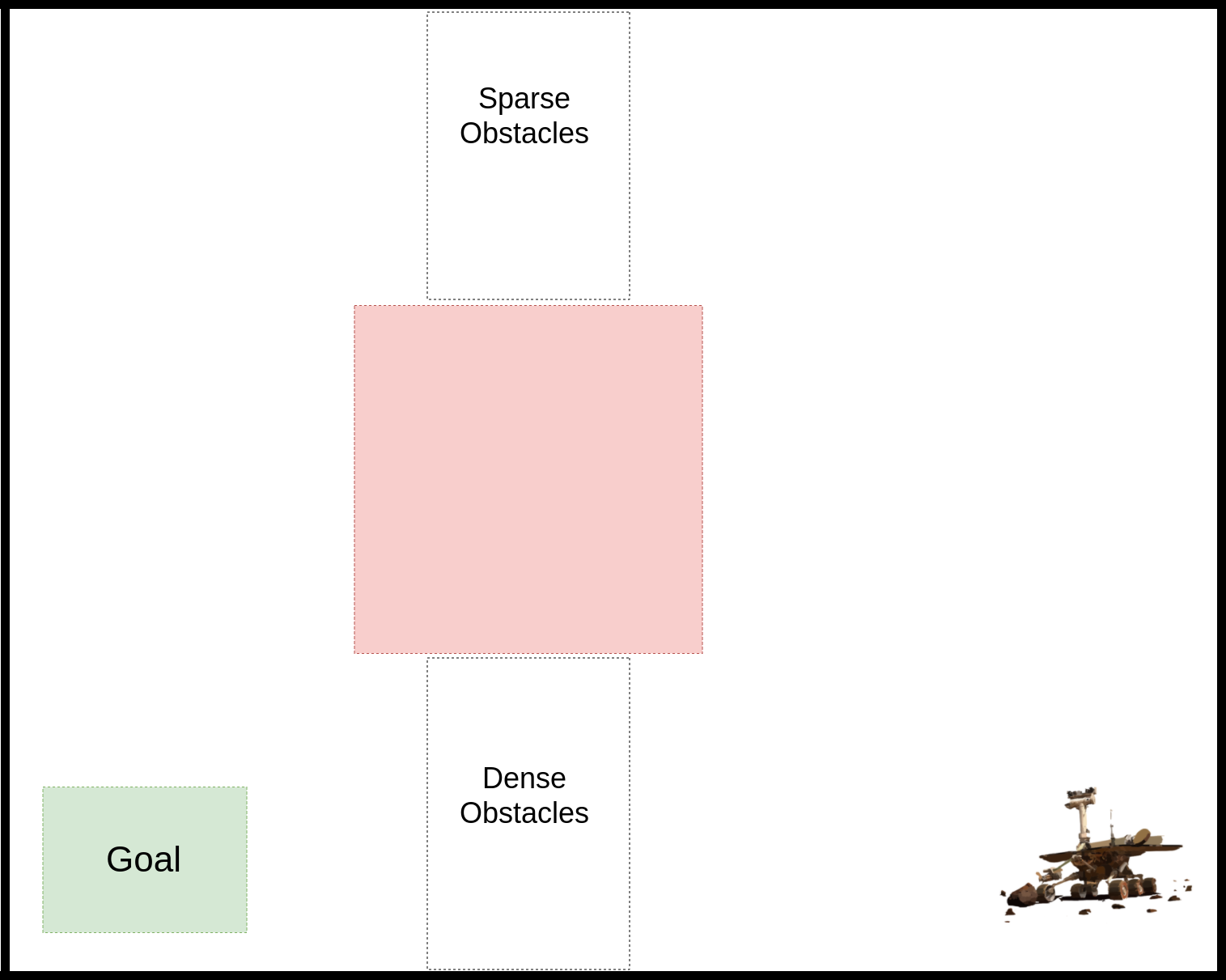}
}\\
\subfloat[True obstacle distribution of the scenario from \ref{fig:mars}. Red cells are obstacles, and the green cells represent the target region the rover shown in blue is trying to reach. \label{fig:mars_grid}]{
	\includegraphics[scale=0.4]{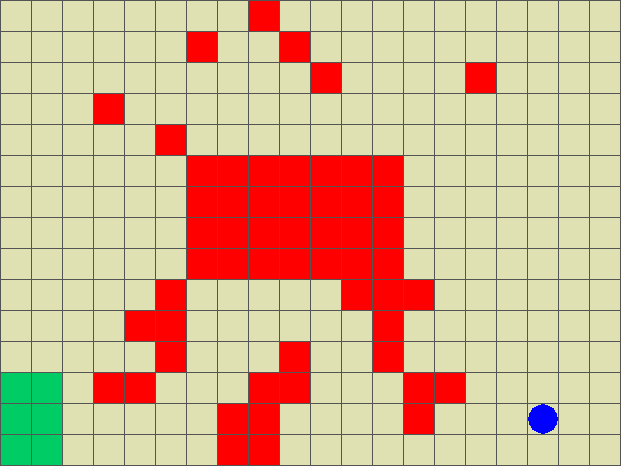}
}
\caption{We represent the environment in \ref{fig:mars} as a gridworld in \ref{fig:mars_grid} which we model as an MDP. 
	}
\end{figure}

The LTL specification is again $\lnot$'crash' $\LTLuntil$ 'goal' where the goal is the green region. We use a time horizon $T = 40$. This time the state in the MDP is given by $(x,y,(o_1,\dots,o_n))$ where $o_i \in [0,1]$ are probability values indicating the likelihood there is an obstacle in state $(x_i,y_i)$. We assign discrete values to $o_i$ by constraining it to values in the set $o_i \in [0,0.2,0.4,0.6,0.8,1]$. 

We model the helicopter flying ahead and scouting by allowing $o_i$ to transition to $0$ or $1$ with probability given by the value of $o_1$, if the rover is within distance $d$ of the obstacle. More explicitly, the state $(x_i,y_i,(o_1,\dots,o_j,\dots,o_n))$ will transition to $(x_i,y_i,(o_1,\dots,1,\dots,o_n))$ with probability $o_j$ and transition to $(x_i,y_i,(o_1,\dots,0,\dots,o_n))$ with probability $1 - o_j$. This will only happen if distance between the states $(x_i,y_i)$ and $(x_j,y_j)$ is less than or equal to a given range $d =2$.  $o_i = 1$ indicates there is an obstacle present in $(x_i,y_i)$. 

The region with sparse obstacle distribution has mostly $o_i = 0$ while the dense obstacle region has many more states with $o_i > 0$. This means that the rover will need the helicopter to scout ahead more often in the route with more obstacles. We assign the transfer entropy cost to states $(o_1,\dots o_n)$ which will penalize using these states in the policy synthesis.

\begin{figure}[!t]
\centering
	\includegraphics[scale=0.25]{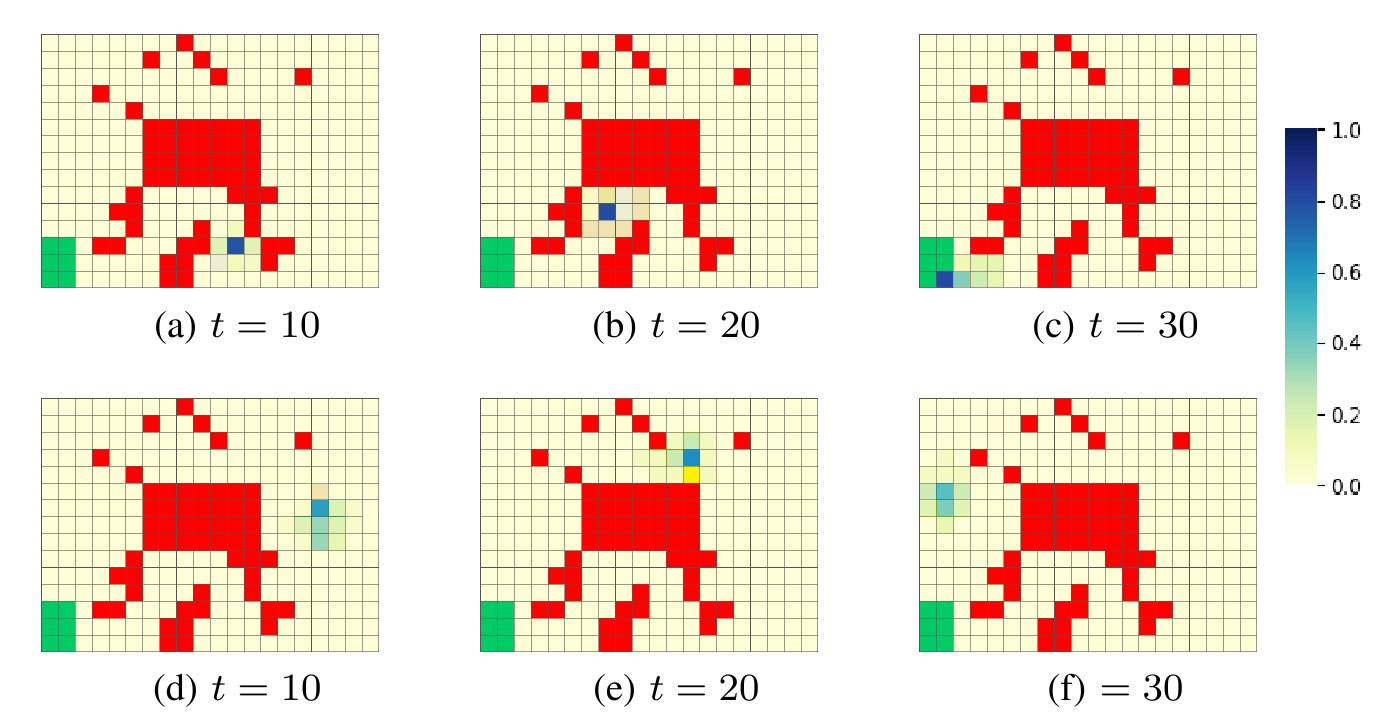}
	
	\caption{Probability distribution of the rover location over time for two cases: (a) - (c) : $\beta = 0$, (d) - (f): $\beta = 10$.
	}
	\label{fig:marsresults}
\end{figure}

Figure \ref{fig:marsresults} shows the evolution of the probability distribution of the agent when the information is free (i.e $\beta$ is small) and when information is expensive ($\beta$ is large). We see that when information is free, the rover takes the path through the dense obstacle distribution. Also note that since there is no information cost, the problem reduces to solving pure reachability and the policy is deterministic. When we set $\beta = 10$, the rover takes the path through the sparse obstacle region. Since there are fewer cells with non-zero probability of rocks, there is less need to sense for rocks and send the helicopter to scout $d$ states ahead The transfer entropy cost from $(o_1,\dots o_n)$ to $u$ is thus lower along the sparse obstacle path.


\section{Conclusion and Future Work}
In this paper, we presented a formal way to integrate co-safe LTL constraints into a minimal-information MDP problem. This is the first step in analyzing temporal logic constraints in communication constrained problems. For future work, we aim to relax the co-safe requirement to allow more general classes of LTL formulas by analyzing the mean information cost over an infinite run. Furthermore, we aim to extend this work to a multiple coordinating agent formulation as this problem setting naturally lends itself to minimizing communication between agents who are trying to satisfy a joint specification.

\bibliographystyle{IEEEtran}
\bibliography{main}

\appendices
\vspace{-0.15cm}
\section{Proof of Lemma 5.2} \label{sec:prf}
We will use the following basic result repeatedly.
\begin{lemma}
\label{lemblahut2} \cite[Theorem 4(c)]{blahut1972computation}
For fixed $\mu(x)$ and $\nu(u)$, the optimal solution to 
\[
\min_{q(u|x)} \sum_{\mathcal{X}}\sum_{\mathcal{U}} \mu(x)q(u|x)\left(\log\frac{q(u|x)}{\nu(u)}+c(x,u)\right)
\]
satisfies 
\[
q(u|x)=\frac{\nu(u)\exp\{-c(x,u)\}}{\sum_\mathcal{U}\nu(u)\exp\{-c(x,u)\}}
\]
$\mu(x)$-almost everywhere.
\end{lemma}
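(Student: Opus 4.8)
The plan is to fix $\mu$ and $\nu$ and exploit the fact that the objective of Lemma \ref{lemblahut2} is a sum over $x\in\mathcal{X}$ of terms $\mu(x)\,g_x\!\left(q(\cdot|x)\right)$ in which the inner distributions $q(\cdot|x)$ interact only through the per-state normalization constraint $\sum_{u}q(u|x)=1$. Hence the minimization \emph{decouples} across states: one may minimize each $g_x$ separately over the probability simplex on $\mathcal{U}$. For any $x$ with $\mu(x)=0$ the corresponding term contributes nothing, so the value of $q(\cdot|x)$ is irrelevant there; this is exactly the source of the ``$\mu(x)$-almost everywhere'' qualification. It therefore suffices to solve, for each fixed $x$ with $\mu(x)>0$,
\[
\min_{q(\cdot|x)}\ \sum_{u}q(u|x)\left(\log\frac{q(u|x)}{\nu(u)}+c(x,u)\right)\quad\text{s.t. }\sum_u q(u|x)=1,\ q(u|x)\ge 0.
\]

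The key step is to recognize this as a Kullback--Leibler projection. I would define the partition function $Z(x)\defeq\sum_{u}\nu(u)\exp\{-c(x,u)\}$ and the Gibbs distribution $p^*(u|x)\defeq \nu(u)\exp\{-c(x,u)\}/Z(x)$, which is a valid probability distribution on $\mathcal{U}$. Absorbing the cost into the logarithm, the per-state objective rewrites as
\[
\sum_{u}q(u|x)\log\frac{q(u|x)}{\nu(u)\exp\{-c(x,u)\}}
=\sum_{u}q(u|x)\log\frac{q(u|x)}{p^*(u|x)}-\log Z(x).
\]
The first term on the right is the relative entropy $D\!\left(q(\cdot|x)\,\|\,p^*(\cdot|x)\right)\ge 0$, with equality if and only if $q(\cdot|x)=p^*(\cdot|x)$, while $-\log Z(x)$ is a constant independent of $q$. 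By Gibbs' inequality the minimum is thus attained uniquely at $q(u|x)=p^*(u|x)=\nu(u)\exp\{-c(x,u)\}/\sum_{u}\nu(u)\exp\{-c(x,u)\}$, which is precisely the claimed minimizer. The nonnegativity constraint holds automatically since $\nu(u)\ge 0$, and the support condition is consistent because any $u$ with $\nu(u)=0$ forces $q(u|x)=0$, the relative-entropy term diverging otherwise.

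As an equivalent and purely computational alternative, I could instead form the Lagrangian $\mathcal{L}=\sum_{u}q(u|x)\left(\log(q(u|x)/\nu(u))+c(x,u)\right)+\lambda\left(\sum_u q(u|x)-1\right)$, set $\partial\mathcal{L}/\partial q(u|x)=\log(q(u|x)/\nu(u))+1+c(x,u)+\lambda=0$, solve to obtain $q(u|x)=\nu(u)\exp\{-c(x,u)\}\,e^{-1-\lambda}$, and fix the multiplier through the normalization constraint to recover the same expression. Since $q\mapsto q\log q$ is convex and the remaining terms are linear in $q$ over a convex feasible set, the objective is convex and the stationary point is the global minimizer, so the first-order condition is also sufficient. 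The only points requiring care are the decoupling argument that legitimizes the per-state reduction (and hence the almost-everywhere clause) and the handling of zeros of $\nu$; neither presents a genuine obstacle, so I expect the proof to be short.
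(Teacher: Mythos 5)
Your proof is correct, but note that the paper itself does not prove this lemma at all: it is imported wholesale by citation to Blahut's 1972 paper (Theorem 4(c)) and used as a black box in the proof of Lemma~\ref{lem52}. What you have written is a correct, self-contained substitute for that citation, and it is essentially the classical argument: the objective decouples over $x$ because the distributions $q(\cdot|x)$ are coupled only through their separate normalizations; states with $\mu(x)=0$ contribute nothing, which is precisely the ``$\mu$-almost everywhere'' qualifier; and for each remaining $x$ the per-state objective is $D\bigl(q(\cdot|x)\,\|\,p^*(\cdot|x)\bigr)-\log Z(x)$ with $p^*(u|x)=\nu(u)e^{-c(x,u)}/Z(x)$, so Gibbs' inequality gives the unique minimizer. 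Your handling of the edge cases is sound: $Z(x)>0$ is automatic since $\nu$ is a probability distribution and $e^{-c(x,u)}>0$, and $\nu(u)=0$ forces $q(u|x)=0$ on pain of an infinite relative entropy, consistent with the claimed formula. The Lagrangian computation you sketch as an alternative is the route Blahut's own development follows more closely (stationarity plus convexity of $q\mapsto q\log q$ over the simplex); the KL-projection phrasing buys you uniqueness and sufficiency in one stroke without checking second-order conditions, so if the authors had wanted the appendix self-contained, your first argument is the cleaner one to inline.
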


To prove Lemma~\ref{lem52}, it is sufficient to show the following statements hold for each $0\leq t\leq T-1$.
\begin{itemize}
\item[(a)] For fixed $q_0, ... , q_{t-1}$, the optimal solution to
\[
\min_{q_t} f(q_0, ... , q_{t-1}, q_t, q_{t+1}^*, ... ,q_{T-1}^*; \nu^*)
\]
satisfies \eqref{eqoptqt}.
\item[(b)] For fixed $q_{t+1}^*, ... , q_{T-1}^*$ satisfying \eqref{eqoptqt}, we have
\[
\sum_{k=t}^{T-1}\ell_k=-\sum_{\mathcal{X}^t}\sum_{\mathcal{U}^{t-1}}\mu_t(x^t, u^{t-1})\log \phi_t^*(x^t, u^{t-1}).
\]
\end{itemize}
We prove these statements by backward induction. For the time step $T-1$, notice that 
\begin{align*}
&f(q;\nu^*)= \text{(constant) } +\\
&\sum_{\mathcal{X}^{T-1}}\sum_{\mathcal{U}^{T-1}}\mu_{T-1}(x^{T-1}, u^{T-2})q_{T-1}(u_{T-1}|x^{T-1}, u^{T-2})\\
&\times \left(\log\frac{q_{T-1}(u_{T-1}|x^{T-1}, u^{T-2}}{\nu_{T-1}^*(u_{T-1}|\tilde{x}^{T-1}, u^{T-2})}+\rho_{T-1}^*(x^{T-1},u^{T-1})\right)
\end{align*}
where ``constant'' is the term that does not depend on $q_{T-1}$. Lemma~\ref{lemblahut2} is applicable to show that the minimizer $q_{T-1}^*$ satisfies \eqref{eqoptqt}. Statement (b) can be shown directly by substituting $q_{T-1}=q_{T-1}^*$ as
\begin{align}
&\ell_{T-1}\!=\!\!\sum_{\mathcal{X}^{T-1}}\!\sum_{\mathcal{U}^{T-1}}\!\mu_{T-1}(x^{T-1}\!\!, u^{T-2})q_{T-1}^*(u_{T-1}|x^{T-1}\!\!, u^{T-2}) \nonumber \\
&\hspace{2ex}\times \left(\log\frac{q_{T-1}^*(u_{T-1}|x^{T-1}, u^{T-2}}{\nu_{T-1}^*(u_{T-1}|\tilde{x}^{T-1}, u^{T-2})}+\rho_{T-1}^*(x^{T-1},u^{T-1})\right) \nonumber \\
&=\sum_{\mathcal{X}^{T-1}}\sum_{\mathcal{U}^{T-1}}\!\mu_{T-1}(x^{T-1}\!\!, u^{T-2})q_{T-1}^*(u_{T-1}|x^{T-1}\!\!, u^{T-2}) \nonumber \\
&\hspace{2ex}\times \left(-\log \phi_{T-1}^*(x^{T-1}, u^{T-2})\right) \nonumber\\
&=-\sum_{\mathcal{X}^{T-1}}\sum_{\mathcal{U}^{T-1}} \mu_{T-1}(x^{T-1}\!\!, u^{T-2})\log\phi_{T-1}^*(x^{T-1}, u^{T-2}) \nonumber \\
&\hspace{2ex}\times \underbrace{\sum\nolimits_{\mathcal{U}_{T-1}}q_{T-1}^*(u_{T-1}|x^{T-1},u^{T-1})}_{=1}. \label{eqqtsubstitute}
\end{align}

To complete the proof, we show that if (a) and (b) hold for the time step $t+1$, then they also hold for the time step $t$. Since (b) is hypothesized for $t+1$, using $\rho_t^*$, it is possible to write
\begin{align*}
&f(q_0, ... , q_t, q_{t+1}^*, ... , q_{T-1}^*;\nu^*) \\
&= \text{(constant) } +\sum_{\mathcal{X}^t}\sum_{\mathcal{U}^t}\mu_t(x^t, u^{t-1})q_t(u_t|x^t, u^{t-1})\\
&\hspace{13ex}\times \left(\log\frac{q_t(u_t|x^t, u^{t-1}}{\nu_t^*(u_t|\tilde{x}^t, u^{t-1})}+\rho_t^*(x^t,u^t)\right)
\end{align*}
where ``constant'' is the term that does not depend on $q_t$. Lemma~\ref{lemblahut2} is applicable once again to show that the minimizer $q_t^*$ satisfies \eqref{eqoptqt}.
Statement (b) for the time step $t$ can be shown by the direct substitution. Details are similar to \eqref{eqqtsubstitute}.

\end{document}